\newtheorem{theorem}{Theorem}
\newtheorem{definition}{Definition}
\newtheorem{remark}{Remark}
\def\S{\Sigma}
\def\bb{{\bm b}}
\def\bf{{\bm f}}
\def\bs{{\bm s}}
\def\bu{{\bm u}}
\def\bv{{\bm v}}
\def\bx{{\bm x}}
\def\by{{\bm y}}
\def\cA{{\mathcal A}}
\def\cF{{\mathcal F}}
\def\cS{{\mathcal S}}
\def\rara{\textcolor{red}{\xrightarrow{\textcolor{black}{\bm a}}}}
\def\rbra{\textcolor{blue}{\xrightarrow{\textcolor{black}{\bm a}}}}
\def\rar{\textcolor{red}{\rightarrow}}
\def\bar{\textcolor{blue}{\rightarrow}}
\def\figpath{.}
\def\REP#1#2{{\color{black}#2}}
\def\INS#1{{\color{black}#1}}
\def\DEL#1{}
\def\REPa#1#2{{\color{black}#2}}
\def\INSa#1{{\color{black}#1}}
\begin{document}

\thispagestyle{empty}
\title{A Graph Representation for Two-Dimensional\\
Finite Type Constrained Systems}

\author{
\authorblockN{Takahiro Ota}
\authorblockA{Dept.\ of Computer \& Systems Engineering\\
Nagano Prefectural Institute of Technology\\
813-8, Shimonogo, Ueda, Nagano, 386-1211, JAPAN\\
Email: \texttt{ota@pit-nagano.ac.jp}}
\and 
\authorblockN{Akiko Manada and Hiroyoshi Morita}
\authorblockA{Graduate School of Information Systems\\
The University of Electro-Communications\\
1-5-1, Chofugaoka, Chofu, Tokyo, 182-8585, JAPAN\\
Email: \texttt{\{amanada, morita\}@is.uec.ac.jp}}
}

\maketitle


\renewcommand{\thefootnote}{\arabic{footnote}}
\setcounter{footnote}{0}

\begin{abstract}
The demand of two-dimensional source coding and constrained coding 
has been getting higher these days, 
but compared to the one-dimensional case, 
many problems have remained open as the analysis is cumbersome. 
A main reason for that would be because there are 
no graph representations discovered so far. 
In this paper, we focus on  a two-dimensional finite type constrained system, a set of two-dimensional 
blocks characterized by a finite number of two-dimensional constraints, and 
propose its graph representation. 
We then show how to generate an element of the two-dimensional finite type constrained system 
from the graph representation.
\end{abstract}

\section{Introduction\label{intro}}
In one-dimensional source coding, a graph representation (\emph{e.g}, labelled directed graph
for a given string) or a finite-state source (\emph{e.g.}, Markov source) is utilized as a probabilistic model.
The graph representation is useful not only for data compression in practical but also
for analysis of the data compression~\cite{MT95}. 
Moreover, in one-dimensional channel coding,
constrained coding is utilized for reducing the likelihood of 
errors by removing data sequences that can be easily affected 
by the predictable noise. 
The study of one-dimensional constrained coding is based on 
the study of one-dimensional constrained systems, which can be 
represented by labelled directed graphs. Indeed, many 
important results are derived by analyzing the 
characteristics of those graphs~\cite{LM95}. 

On the other hand, 
the study of two or higher dimensional cases
is cumbersome, and many important problems (\emph{e.g.}, \INS{probabilistic models,} the capacity, the existence of 
infinite arrays not containing forbidden patterns) are still open or known to be 
unsolvable in finite steps in general \REP{\cite{Lin04}}{\cite{Lin04, JF10}}. 
The main reason for this would be because  
there are no explicit way to present such high-dimensional source and constrained coding using finite graphs~\cite{JF10}.
Conversely, if some graph representation is proposed, then such a representation can be a 
breakthrough to approach open problems. 

In this paper, 
for a given finite set $\cF$ of two-dimensional forbidden blocks~(forbidden set), 
we focus on the two-dimensional Finite Type Constrained System (2D-FTCS) which is a 
set of blocks that do not contain any forbidden block in $\cF$ as subblock.
We construct a labelled directed graph based on forbidden blocks and then
show how to  generate an arbitrary block in the 2D-FTCS from the graph. 
Thus, the existence of such a graph presentation 
can be used to answer the existence of blocks in the 2D-FTCS, 
and preferably, to explicitly compute its capacity.

The rest of the paper is organized as follows. 
We first give in Section~\ref{Notations} basic notations and definitions.
In Section~\ref{Proposed}, we propose a labelled graph representation for a 2D-FTCS and
prove that the graph representation generates a block if and only if
the block is an element of the 2D-FTCS. We then show in Section~\ref{howto} a process to generate 
an arbitrary block in the 2D-FTCS from the graph. 
We terminate this paper with conclusion and future works in Section~\ref{conclusion}.

\section{Basic Notations and Definitions}\label{Notations}
\subsection{Alphabet and Block}
Let $\S$ be an \emph{alphabet}, a finite set of symbols. 
We denote by  $\S^{(m,n)}$ the set of $m\times n$ finite blocks $\bb=(b_{i,j})_{1\le i\le m, 1 \le j\le n}$ over $\S$, where $b_{i,j} \in \S$ is the element of $\bb$ at $(i,j)$-coordinate. For simplicity, we always assume that blocks are finite. Furthermore, define 
$\S^{(*,*)}=\cup_{m,n\ge 0}\S^{(m,n)}, $
where $\S^{(m,n)}$ consists only of the \emph{empty block} ${\bm \lambda}$ 
when at least one of $m$ and $n$ is $0$.  
Given a block $\bb \in \S^{(*,*)}$, denote by $|\bb|_r$ and $|\bb|_c$  the length of row (the \emph{height}) and the length of column (the \emph{width}), respectively.   
For example, when $\S = \{0, 1\}$, Fig.~\ref{fig:ex33s} illustrates $\bb \in \S^{(3, 3)}$ where $|\bb|_r\!=\!|\bb|_c\!=\!3$.

\begin{figure}[htb]
	\begin{minipage}{0.2\hsize}
		\begin{center}
			\includegraphics[clip, width=0.8\linewidth]{\figpath/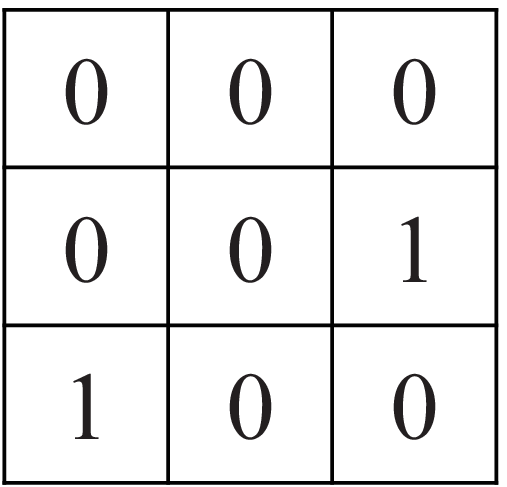}
		\end{center}
		\caption{A $3 \times 3$ block $\bb$.}
		\label{fig:ex33s}
	\end{minipage}
	\begin{minipage}{0.75\hsize}
		\begin{center}
			\includegraphics[clip, width=0.8\linewidth]{\figpath/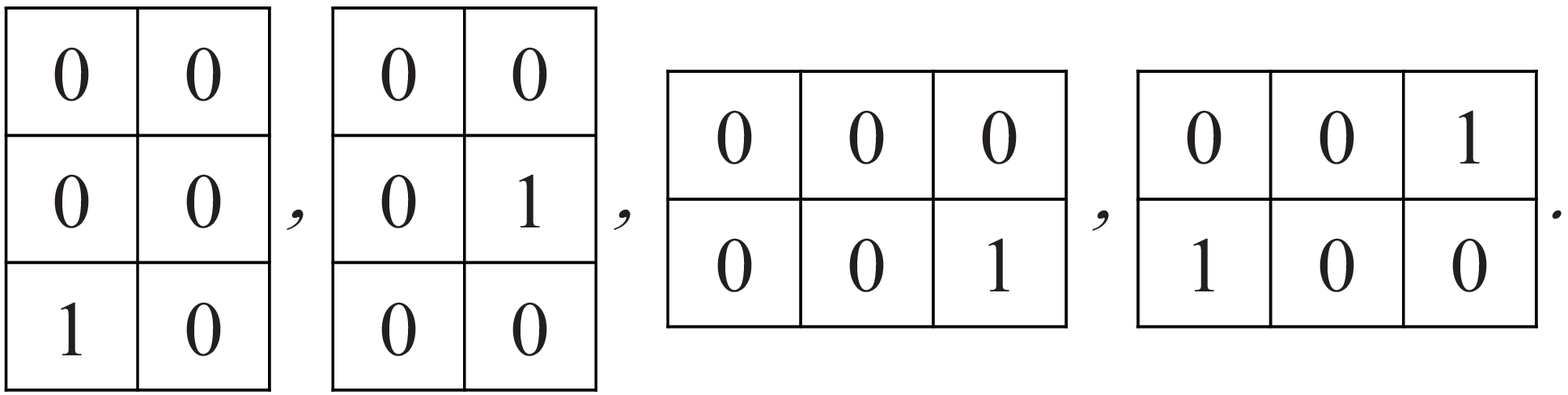}
		\end{center}
		\caption{$\pi_c(\bb)$, $\sigma_c(\bb)$, $\pi_r(\bb)$, and $\sigma_r(\bb)$ of $\bb$ in Fig.~\ref{fig:ex33s}.}
		\label{fig:ps2}
	\end{minipage}
\end{figure}

\subsection{Subblock, Prefix, Suffix and Concatenation}
Throughout this subsection, we always let $\bb=(b_{i,j}) \in \S^{(m,n)}$ and $\bb'=(b'_{i',j'})\in \S^{(m',n')}$ be
two blocks of size $m\times n$ and $m' \times n'$, respectively. 

Block $\bb'$ is called a \emph{subblock} of $\bb$ when $\bb'$ appears within $\bb$; that is, 
$b'_{i',j'}=b_{k+i',l+j'}$ for some fixed non-negative integers $k,l$ and any integers 
$1\le i'\le m', 1\le j'\le n'$. 
In particular, subblock $\bb'$ is called a \emph{prefix} (\emph{resp.}, \emph{suffix}) of $\bb$
when $k=l=0$ (\emph{resp.} when 
$k=m-m'$ and $l=n-n'$). 
Among prefixes and suffixes, we often focus on the prefix and the suffix of size $m\times (n-1)$, 
and the prefix and the suffix of size $(m-1)\times n$ which we denote by $\pi_c(\bb), \sigma_c(\bb), \pi_r(\bb), \sigma_r(\bb)$, respectively. For example, for $\bb$ in Fig.~\ref{fig:ex33s}, 
Fig.~\ref{fig:ps2} shows $\pi_c(\bb)$, $\sigma_c(\bb)$, $\pi_r(\bb)$, and $\sigma_r(\bb)$
from the left-hand side.

For blocks $\bb$ and $\bb'$, when 
$m=m'$, we can consider the block $[\bb  , \bb']_c\in \S^{(m, n+ n')}$ generated by column-wisely concatenating $\bb$ and $ \bb'$.  Similarly, when $n= n'$, we can consider the block $[\bb,  \bb']_r\in \S^{(m+ m', n)}$ generated by row-wisely concatenating $\bb$ and $\bb'$.

\subsection{Two-Dimensional Finite Type Constrained Systems}
Let $h, w$ be some fixed non-negative integers. 
Given a finite set $\cF\subset \S^{(h,w)}$,
we define a \emph{Two-Dimensional Finite Type Constrained System (2D-FTCS)} $\cS_{\cF}$ to be a subset of $\S^{(*,*)}$ which can be characterized by $\cF$. More precisely, for a 2D-FTCS $\cS_{\cF}$,
a block $\bb$ is an element of $\cS_{\cF}$ if and only if $\bb$ does not contain any block $\bf \in  \cF$ as subblock. \footnote{The definition of 2D-FTCSs can vary, depending on authors. Our definition covers more general case compared with the definition in \cite{LPCM08}.}  
We call $\cF$ a \emph{forbidden set} and an element $\bf \in \cF$ a \emph{forbidden block}. 

We note that for one-dimensional case, the definition of an 1D-FTCS in general starts with the  
definition of a constrained system for which the existence of 
a graph representation (called \emph{presentation}) matters. We do not refer it at this moment, but we will discuss it in latter sections which shows that out definition is a natural extension 
of the definition of a 1D-FTCS.

When a forbidden set $\cF\subset \S^{(h,w)}$ is given, we define the \emph{allowed set} $\cA_{\cF} \subset \S^{(h,w)}$ for $\cF$ as 
$\cA_{\cF}:=\S^{(h,w)} \setminus \cF$. 
We further define the \emph{modified 2D-FTCS}
$\cS_{\cF}^{(*,*)}$ from a 2D-FTCS $\cS_{\cF}$ so that
\begin{align}
\cS_{\cF}^{(*,*)} :=\cS_{\cF}\cap \left(\bigcup_{m\ge h, n\ge w}\S^{(m,n)}\right)
\end{align}
For example, consider the 2D-FTCS $\cS_{\cF}$ characterized 
by  $\cF$ in Fig.~\ref{fig:Ft}, which is well-known 
as the hard-square constraint~\cite{JF10}, forbidding the appearance of
 $[1,1]_r$ and $[1,1]_c$. 
Fig.~\ref{fig:A} is  
the allowed set $\cA_{\cF}$ for $\cF$.
Observe that block $\bb$
in Fig.~\ref{fig:ex33s} is an element of $\cS_{\cF}$.

\begin{figure}[htb]
	\begin{minipage}{0.45\hsize}
		\begin{center}
			\includegraphics[clip, width=1.0\linewidth]{\figpath/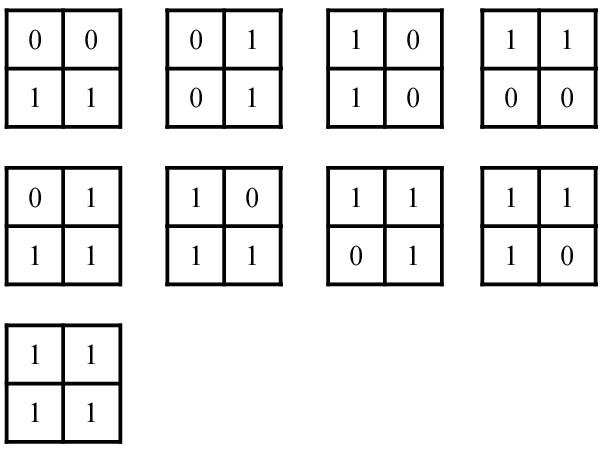}
		\end{center}
		\caption{Forbidden blocks in $\cF$.}
		\label{fig:Ft}
	\end{minipage}
	\begin{minipage}{0.55\hsize}
		\begin{center}
			\includegraphics[clip, width=0.81\linewidth]{\figpath/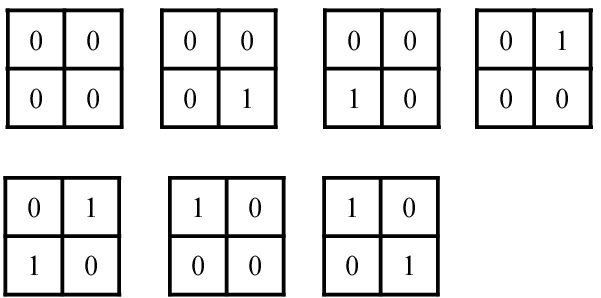}
		\end{center}
		\caption{Blocks in the allowed set $\cA_{\cF}$ for $\cF$.}
		\label{fig:A}
	\end{minipage}
\end{figure}

\section{A Graph Representation for 2D-FTCSs}~\label{Proposed}
The main contribution of this paper is to propose 
a graph representation for 2D-FTCS $\cS_{\cF}$. 
We only focus on graph representations for $\cS_{\cF}^{(*,*)}$,
but observe that it is enough since $\cS_{\cF}\setminus \cS_{\cF}^{(*,*)}$ 
is a finite set.

\subsection{Notations and Definitions}
Let $G=(V,E, \ell)$ be a labelled directed graph consisting of 
vertex set $V$, labelled edge set $E \subset V\times V$ and 
an edge labelling $\ell: E\rightarrow \S^{(*,*)}$.
  
Now suppose a 2D-FTCS $\cS=\cS_{\cF}$ characterized by finite forbidden set $\cF\subset \S^{(h,w)}$
is given. We construct labelled directed graphs $G_r(\cS)$ and $G_c(\cS)$
from $\cS$ as follows. 

\begin{definition}
Given a 2D-FTCS $\cS=\cS_{\cF}$ characterized by $\cF\subset \S^{(h,w)}$,  
the row-wise presentation $G_r(\cS)=(V_r, E_r, \ell_r)$ of $\cS$ is a labelled directed graph satisfying 
\begin{itemize}
\item $V_r$ is the allowed set $\cA_{\cF}$ for $\cF$; and
\item draw a blue (row-wise) edge 
from $\bu$ to $\bv$  labelled \REP{$\bm a(=\ell_r(\bu, \bv)) \in \S^{(h,1)}$}{{$\bm a(=\ell_r(\bu, \bv)) \in \S^{(1,w)}$}} 
(\it{i.e.}, $\bu \rbra \bv$) if and only if $\sigma_c(\bm u)=\pi_c(\bm v)$ 
and the $h$-th row of $\bm v$ is $\bm a$.
\end{itemize}
Similarly, the column-wise presentation $G_c(\cS)=(V_c,E_c, \ell_c)$ of $\cS$ is a labelled directed graph satisfying  
\begin{itemize}
\item $V_c$ is the allowed set $\cA_{\cF}$ for $\cF$; and
\item draw a red (column-wise) edge
from $\bu$ to $\bv$  labelled \REP{$\bm a(=\ell_c(\bu, \bv)) \in \S^{(1,w)}$}{$\bm a(=\ell_c(\bu, \bv)) \in \S^{(h,1)}$}
(\it{i.e.}, $\bu \rara \bv$) if and only if $\sigma_r(\bm u)=\pi_r(\bm v)$ 
and the $w$-th column of $\bv$ is $\bm a$.
\end{itemize}
\end{definition}

For simplicity, we omit edge labels. Furthermore, we identify a vertex $\bv \in \cA_{\cF}$ with its \emph{identifier} $g(\bv)$; a positive integer defined
via  bijection $g: \cA_{\cF} \rightarrow \{1, 2, \dots, ||\cA_{\cF}||\}$, where $||\cA_{\cF}||$ is the cardinality of $\cA_{\cF}$.
For example, Fig.~\ref{fig:Gr} and Fig.~\ref{fig:Gc} show  $G_r(\cS)$ and $G_c(\cS)$ for $\cA_{\cF}$ in Fig.~\ref{fig:A}, respectively, and the number in a circle represents its identifier.

	\begin{figure}[htb]
		\begin{center}
			\includegraphics[clip, width=0.6\linewidth]{\figpath/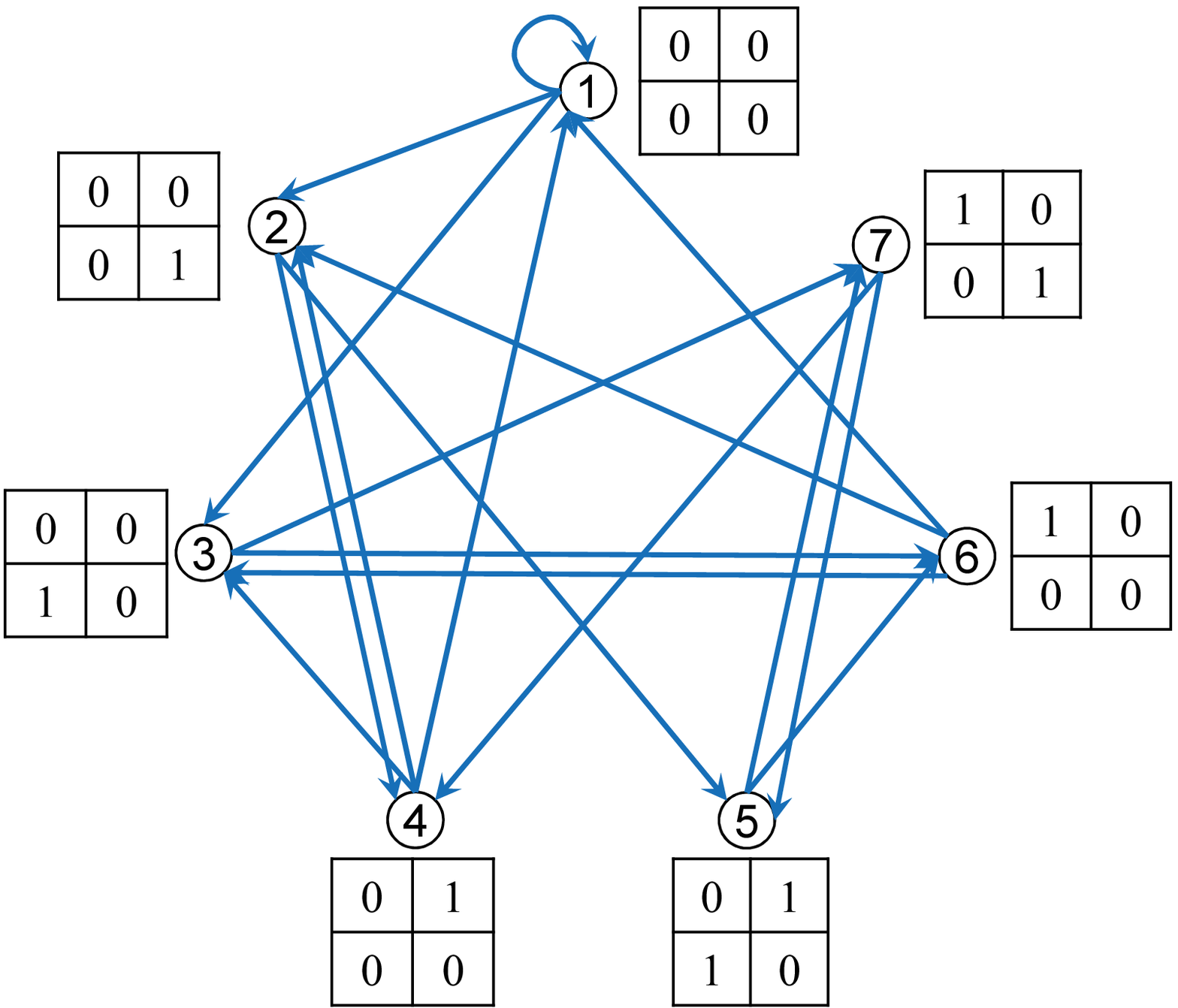}
			\caption{The row-wise presentation $G_r(\cS)$.}
			\label{fig:Gr}
		\end{center}
	\end{figure}
	\begin{figure}[htb]
		\begin{center}
			\includegraphics[clip, width=0.6\linewidth]{\figpath/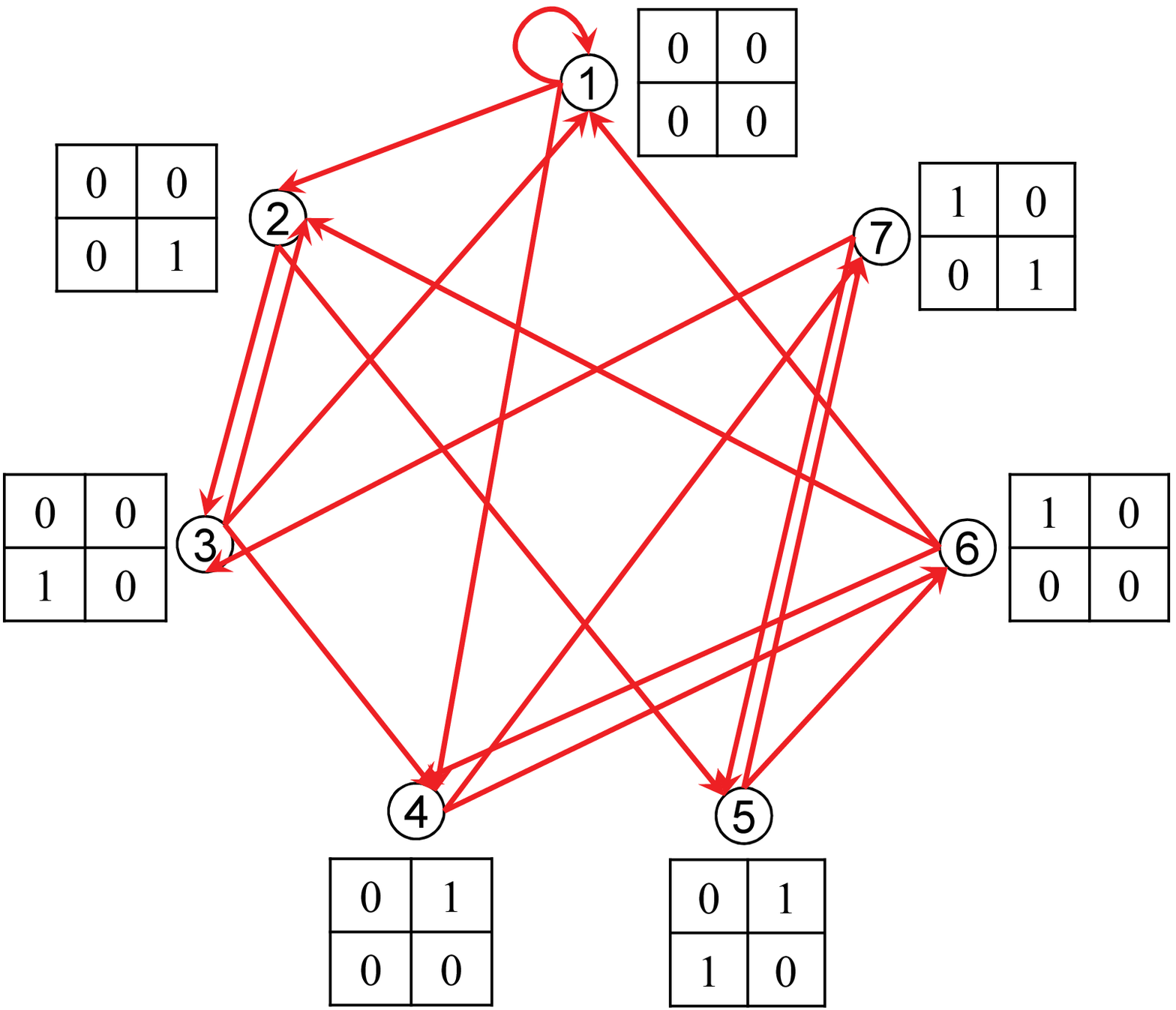}
			\caption{The column-wise presentation $G_c(\cS)$.}
			\label{fig:Gc}
		\end{center}
	\end{figure}

Now take an arbitrary block $\bx$ of height 
\REP{$|\bx|_r> h$}
{$|\bx|_r(=h') > h$}
and width $w$. 
We say a path 
\REP{$\eta : \bu_0 \bar \bu_1 \bar \cdots \bar \bu_{|\bx|_r-k}$}
{$\eta : \bu_0 \bar \bu_1 \bar \cdots \bar \bu_{h'-h}$}  
in $G_r(\cS)=(V_r, E_r, \ell_r)$  \emph{generates} the block $\bx$ 
if concatenating $\ell_{r}(\bu_i,\bu_{i+1})$ in the order of 
\REP{$i=0,1, \ldots, |\bx|_r-h-1$}
{$i=0,1, \ldots, h'-h-1$}
after $\bu_0$ 
 \INS{(the \emph{head block})} 
in row-wise generates $\bx$.
In other words, path $\eta$ generates block $\bx$ if
\REP{$$\bx=[\bu_0, \ell_{r}(\bu_0,\bu_{1}), \ell_{c}(\bu_1,\bu_{2}), \ldots, \ell_{r}(\bu_{|\bx|_r-h-1},\bu_{|\bx|_r-h})]_r. $$}
{$$\bx=[\bu_0, \ell_{r}(\bu_0,\bu_{1}), \ell_{c}(\bu_1,\bu_{2}), \ldots, \ell_{r}(\bu_{h'-h-1},\bu_{h'-h})]_r. $$}
If such a path $\eta$ exists, we say $G_r(\cS)$ \emph{generates} block $\bx$.
By convention, $G_r(\cS)$ generates any $h\times w$ block $\bx \in \cA_{\cF}$, 
assuming path $\eta$ in $G_r(\cS)$ consisting only of $\bx$.

Similarly, for a block $\by$ of height $h$ and width
\REP{$|\by|_c> w$,}
{$|\by|_c(=w') > w$,}
we say a path 
\REP{$\tau: \bv_0 \rar \bv_1 \rar \cdots \rar \bv_{|\by|_c-w}$}
{$\tau: \bv_0 \rar \bv_1 \rar \cdots \rar \bv_{w'-w}$}
in $G_c(\cS)=(V_c, E_c, \ell_c)$
\emph{generates} the block $\by$
if concatenating $\ell_{c}(\bv_i,\bv_{j+1})$ in the order of
\REP{$j=0,1, \ldots, |\by|_c-w-1$}
{$j=0,1, \ldots, w'-w-1$}
after 
\INS{the head block} $\bv_0$ 
in column-wise generates $\by$.
In other words, path $\tau$  generates block $\by$ if
\REP{$$\by=[\bv_0, \ell_{c}(\bv_0,\bv_{1}), \ell_{c}(\bv_1,\bv_{2}), \ldots, \ell_{c}(\bv_{|\by|_c-w-1},\bv_{|\by|_c-w})]_c. $$}
{$$\by=[\bv_0, \ell_{c}(\bv_0,\bv_{1}), \ell_{c}(\bv_1,\bv_{2}), \ldots, \ell_{c}(\bv_{w'-w-1},\bv_{w'-w})]_c. $$}
If such a path $\tau$ exists, we say
$G_c(\cS)$ \emph{generates} block $\by$.  
By convention, $G_c(\cS)$ generates any $h\times w$ block  $\by \in \cA_{\cF}$, 
assuming path $\tau$ in  $G_c(\cS)$ consisting only of $\by$.

\begin{remark}
From the definitions of $G_r(\cS)$ and $G_c(\cS)$,  it is straightforward to check that
$G_r(\cS)$ generates a block $\bx$ of width $w$ if and only if
each $h\times w$ subblock of $\bx$ is in $\cA_{\cF}$, which is equivalent to say 
$\bx \in \cS^{(*,w)}_{\cF}:=\cS_{\cF}\cap \left(\cup_{m\ge h}\S^{(m,w)}\right)$.
Similarly,  $G_c(\cS)$ generates a block $\by$ of height $h$ if and only if 
each $h\times w$ subblock of $\by$ is in $\cA_{\cF}$, which is equivalent to say 
$\by \in \cS^{(h,*)}_{\cF}:=\cS_{\cF}\cap \left(\cup_{n\ge w}\S^{(h,n)}\right)$.
\label{st_rem}
\end{remark}

We can naturally extend the notion of generating a certain size of block to the notion of generating any block $\bb$ of height $|\bb|_r\ge h$ and width $|\bb|_r \ge w$ as follows.  

\begin{definition}\label{def:gGrGc}
Given a block $\bb$ of height $|\bb|_r\ge h$ and width $|\bb|_r \ge w$,
we say $G_r(\cS)$ generates $\bb$
if $G_r(\cS)$ can generate
any $|\bb|_r\times w$ subblock of $\bb$, and \INSa{$G_c(\cS)$}  generates $\bb$ if $G_c(\cS)$ can generate
any $h \times |\bb|_c$ subblock of $\bb$.
\end{definition}

From Definition~\ref{def:gGrGc}, if $G_r(\cS)$ \REP{or}{and} $G_c(\cS)$ generates $\bb$, then any $h \times w$ subblock in $\bb$ is an element of the allowed set, and hence, a vertex in $G_r(\cS)$ and $G_c(\cS)$.

\subsection{Relationship between $G_r(\cS)$ and $G_c(\cS)$}

Now take any block $\bb$ such that $G_r(\cS)$ \REP{or}{and} $G_c(\cS)$ can generate. 
For a $h \times w$ subblock $\bv$ in $\bb$ whose right-bottom coordinate is
$(i, j)~(h\leq i \leq |\bb|_r, w\leq j \leq |\bb|_c$),
let $s(i, j)$ be the identifier $g(\bv)$ of the vertex $\bv$.
Since any such subblock $\bv$ is a vertex in $G_r(\cS)$ and $G_c(\cS)$,
for given $s(i-1, j-1)$ and $s(i, j)$, where $k <i \leq |\bb|_r, l < j \leq |\bb|_c$,
there must exist two paths from $s(i-1, j-1)$ to $s(i, j)$
\begin{enumerate}
\item $s(i-1, j-1) \rar s(i-1, j) \bar  s(i, j)$ 
\item $s(i-1, j-1) \bar s(i, j-1)  \rar s(i, j)$
\end{enumerate}
as shown in Fig.~\ref{fig:condT}. For example, consider $\bb$ in Fig.~\ref{fig:ex33s}, and $G_r(\cS)$ and $G_c(\cS)$ in Figs.~\ref{fig:Gr} and \ref {fig:Gc}.
For $s(2, 2)=1$ and $s(3, 3)=4$, there are two paths; $1  \rar 2(=s(2, 3)) \bar  4$ and
$1  \bar 3(=s(3, 2))  \rar 4$.

	\begin{figure}[htb]
		\begin{center}
			\includegraphics[clip, width=0.4\linewidth]{\figpath/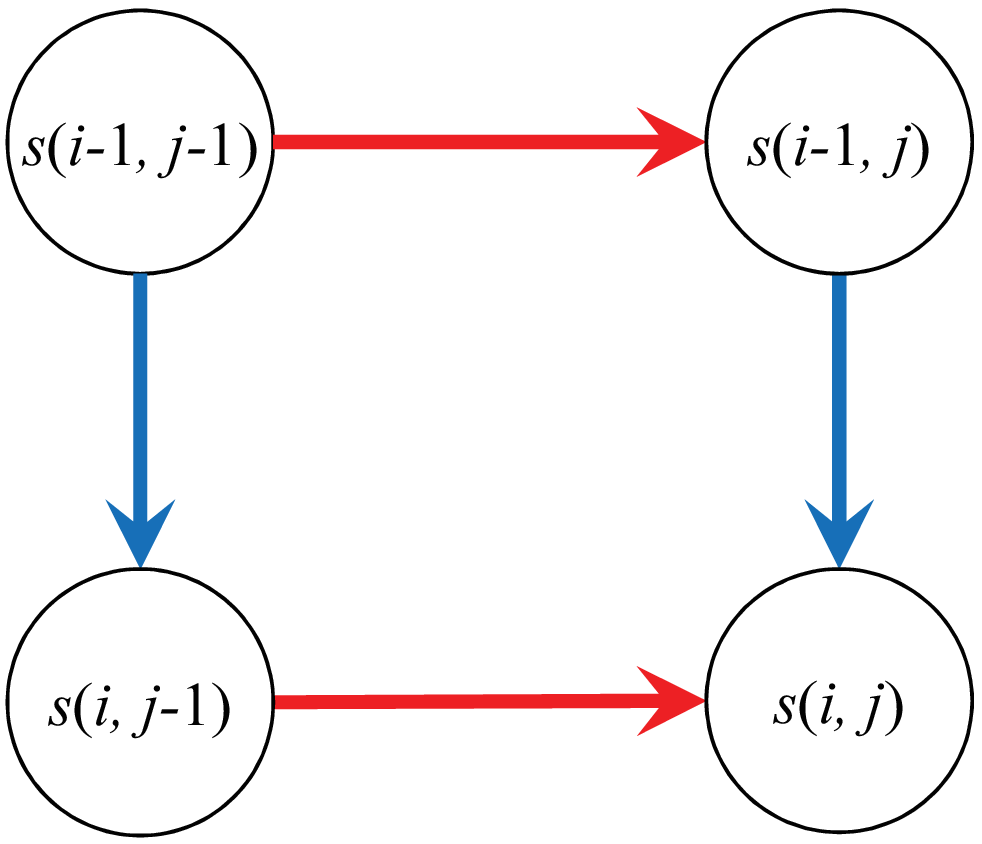}
			\caption{Paths among four adjacent vertices in $G_r(\cS)$ and $G_c(\cS)$.}
			\label{fig:condT}
		\end{center}
	\end{figure}

For $G_r(\cS)$ and $G_c(\cS)$ in Figs.~\ref{fig:Gr} and \ref{fig:Gc},  
Fig.~\ref{fig:exCond} shows all the combinations of paths among four adjacent vertices in $G_r(\cS)$ and $G_c(\cS)$. 
Four numbers at left-top, right-top, left-bottom and right-bottom in a rectangle
are represented by $s(i-1, j-1)$, $s(i-1, j)$, $s(i, j-1)$, and $s(i, j)$, respectively.
For example, the first rectangle in the third row \REP{of}{in} Fig.~\ref{fig:exCond}, 
$s(i-1, j-1)$, $s(i-1, j)$, $s(i, j-1)$ and $s(i, j)$ are given by 2, 5, 4, and 6, respectively.
In this example, there are total 63 combinations of paths among four adjacent vertices in $G_r(\cS)$ and $G_c(\cS)$.
	\begin{figure}[htbp]
		\begin{center}
			\includegraphics[clip, width=0.75\linewidth, height=0.70\linewidth]{\figpath/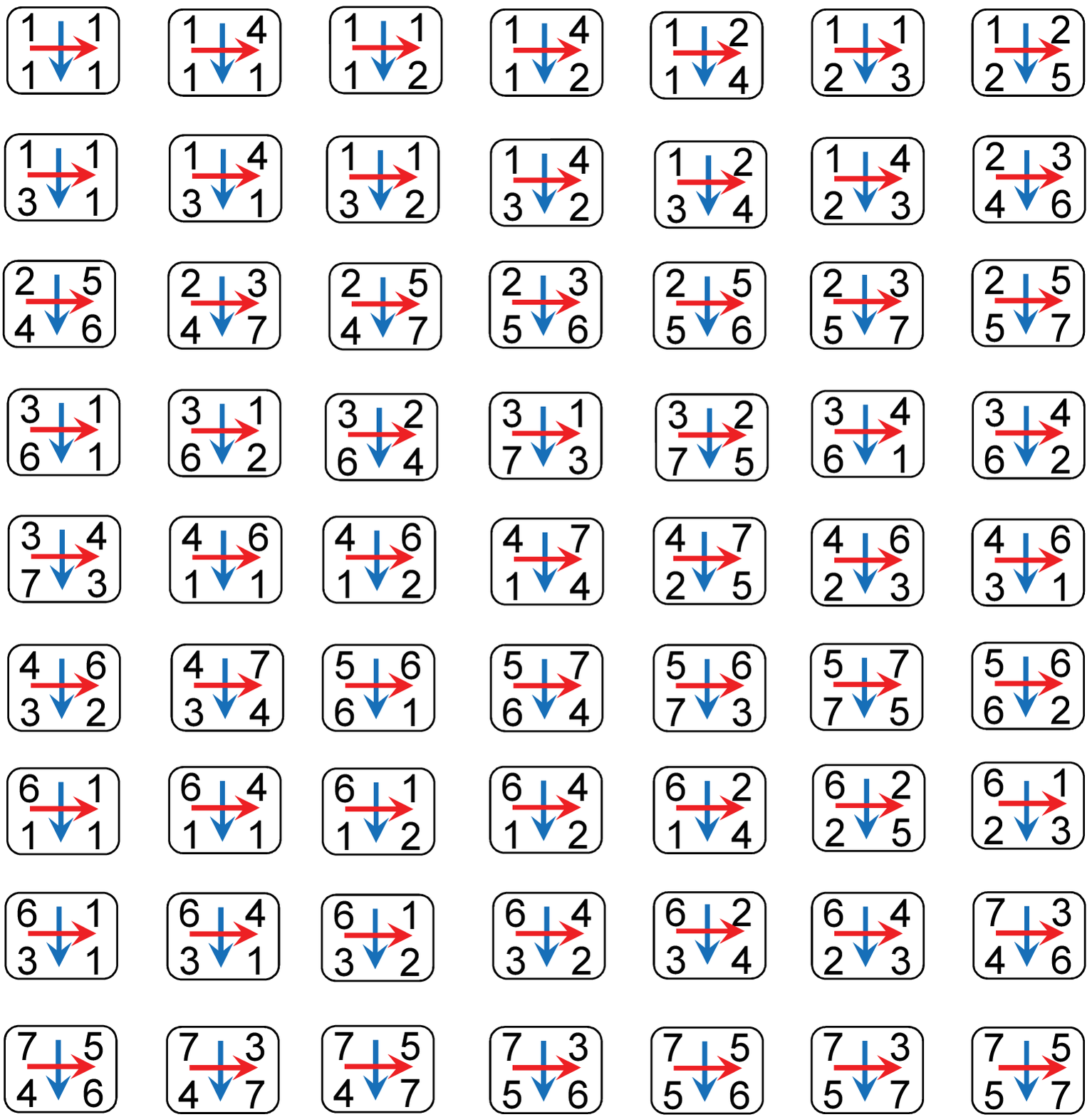}
			\caption{An example of all the combinations of paths among four adjacent vertices in $G_r(\cS)$ and $G_c(\cS)$.}
			\label{fig:exCond}
		\end{center}
	\end{figure}

Fig.~\ref{fig:exRelC} illustrates two combinations of paths among four adjacent vertices, $(1, 2, 1, 4)$ and $(2, 5, 4, 6)$, 
in $G_c(\cS)$ where $(a, b, c, d)$ represents $a = s(i-1, j-1), b = s(i-1, j), c = s(i, j-1),$ and $d = s(i, j)$.
Light green and orange rectangles represent (1, 2, 1, 4) and (2, 5, 4, 6), respectively.

	\begin{figure}[htb]
		\begin{center}
			\includegraphics[width=0.5\linewidth]{\figpath/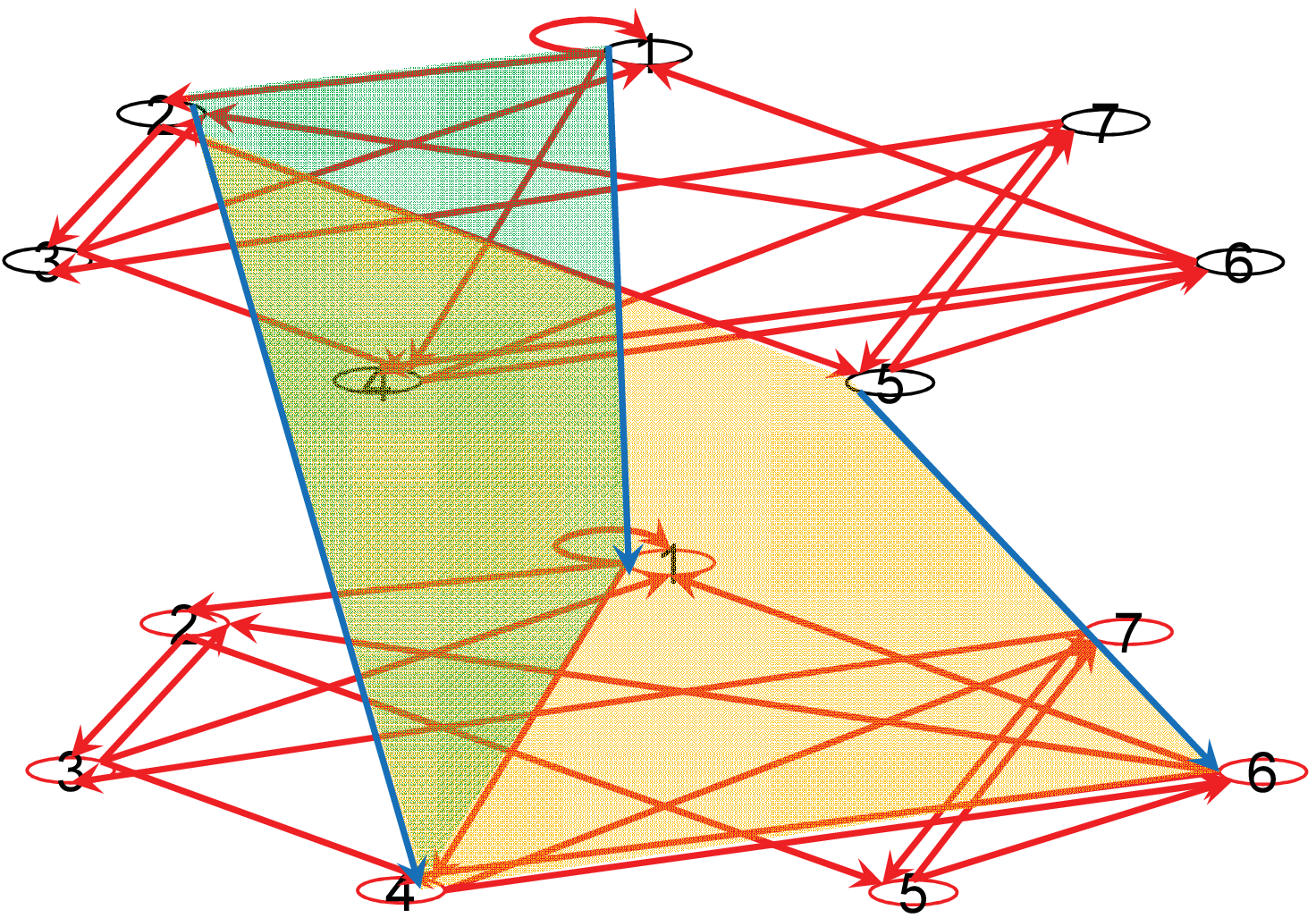}
			\caption{Two combinations of paths among four adjacent vertices, $(1, 2, 1, 4)$ and $(2, 5, 4, 6)$, in $G_c(\cS)$.}
			\label{fig:exRelC}
		\end{center}
	\end{figure}

\subsection{Graph Representation and its Paths}\label{sec:GR}
We are now in a position of combining the row-wise presentation $G_r(\cS)$ and the 
column-wise presentation $G_c(\cS)$ to obtain a
graph representation $G(\cS)$.

\begin{definition} 
We define a graph representation $G(\cS)=(V_{\cS}, E_{\cS}, \ell_{\cS})$ of a 2D-FTCS $\cS=\cS_{\cF}$ so that
\begin{itemize}
\item $V_{\cS}=V_{r}=V_{c}=\cA_{\cF}$.
\item $E_{\cS}=E_{r}\cup E_{c}$.
\item \[
\ell_{\cS}(\bu, \bv)=\begin{cases}
\ell_r(\bu, \bv) \ (\mbox{if $(\bu,\bv)\in E_r$})\\
\ell_c(\bu, \bv) \ (\mbox{if $(\bu,\bv)\in E_c$})
\end{cases}
\]
\end{itemize}
\end{definition}

Thus, $G(\cS)$ is easily obtained from $G_r(\cS)$ and $G_c(\cS)$ by combining them, so 
$G_r(\cS)$ and $G_c(\cS)$ are subgraphs of $G(\cS)$.  

We define that $G(\cS)$ generates $\bb$ as follows.

\begin{definition}
Let $\cS=\cS_{\cF}$ be a 2D-FTCS characterized by $\cF\subset \S^{(h,w)}$. 
We say graph representation $G(\cS)$ generates block $\bb$ if 
\begin{itemize}
\item any $|\bb|_r \times w$ subblock $\bb'$ of $\bb$ is generated 
by a path in $G(\cS)$ consisting only of blue edges (and hence, $\bb'$  is generated 
by a subgraph $G_r(\cS)$); and 
\item any $h \times |\bb|_c$ subblock $\hat \bb$ of $\bb$ is generated 
by  a path in $G(\cS)$ consisting only of red edges (and hence, $\hat \bb$  is generated 
by a subgraph $G_c(\cS)$).
\end{itemize}
\end{definition}

 \begin{remark}
For paths used to generate $\bb$, 
it is important to observe that they 
maintain the following conditions in terms of $s(i,j)$.

\begin{description}
\item[Case 1:]\ \ $i = h\ \text{or}\ j = w.$
\item[Case 2:]\ \ $i > h\ \text{and}\ j > w.$
\end{description}

\underline{Case 1:}\ 
When $i = h$, a prefix of $\bb$ of size $h \times |\bb|_c$ is
generated by $G_c(\cS)$. In this case, since $i\!-\!1 < h $, $s(i\!-\!1, j\!-\!1)$ does not exist. 
Hence, a path in $G_c(\cS)$ is not restricted by paths among four adjacent vertices in $G_r(\cS)$ and $G_c(\cS)$ shown in Fig.~\ref{fig:condT}. 
Moreover, for each $w\leq j \leq |\bb|_c$, $s(h, j)$ 
turns out to be the head block 
of the path $\tau$ in $G_r(\cS)$ 
generating a $|\bb|_r \times w$ subblock of $\bb$ whose 
right-bottom coordinate  is $(|\bb|_r, j)$.

Similarly, 
when $j = w$, a prefix of $\bb$ of size 
\REPa{$|\bb|_c \times w$}
{$|\bb|_r \times w$}
is
generated by $G_r(\cS)$. In this case, since $j\!-\!1 < w$, $s(i\!-\!1, j\!-\!1)$ does not exist. 
Hence, a path in $G_r(\cS)$ is not restricted by paths among four adjacent vertices in $G_r(\cS)$ and $G_c(\cS)$ shown in Fig.~\ref{fig:condT}. 
Moreover, for each $h\leq i \leq |\bb|_r$, $s(i,w)$ turns out to be the head block 
of the path $\eta$ in $G_c(\cS)$ 
generating a $h\times |\bb|_c$ subblock of $\bb$ 
 whose right-bottom coordinate is $(i, |\bb|_c)$.

\underline{Case 2:}\ 
From paths among four adjacent vertices in $G_r(\cS)$ and $G_c(\cS)$ shown in Fig.~\ref{fig:condT},
$s(i\!-\!1, j\!-\!1)$, $s(i\!-\!1, j)$, and $s(i, j\!-\!1)$ are required to determine a path to $s(i, j)$.
Moreover, there must exist two paths from $s(i\!-\!1, j\!-\!1)$ to $s(i, j)$, those are
$s(i\!-\!1, j\!-\!1) \rar s(i\!-\!1, j) \bar s(i, j)$ and $s(i\!-\!1, j\!-\!1)\bar s(i, j\!-\!1) \rar s(i, j)$.
\label{case:rem}
\end{remark}

We prove Theorem~\ref{th:GA} which is a main result of the paper.

\begin{theorem}\label{th:GA}
$G(\cS)$ generates $\bb$ if and only if $\bb \in \cS_{\cF}^{(*,*)}$.
\end{theorem}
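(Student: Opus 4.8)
The plan is to reduce the statement to Remark~\ref{st_rem}, which already settles the one-dimensional content for individual strips, and then to observe that the two-dimensional condition decouples along the vertical width-$w$ strips and the horizontal height-$h$ strips. First I would unwind the definition of ``$G(\cS)$ generates $\bb$'': by that definition together with Definition~\ref{def:gGrGc}, $G(\cS)$ generates $\bb$ if and only if (i) $G_r(\cS)$ generates every $|\bb|_r\times w$ subblock of $\bb$ via blue edges, and (ii) $G_c(\cS)$ generates every $h\times|\bb|_c$ subblock of $\bb$ via red edges. Since the notion of generation is only defined for blocks with $|\bb|_r\ge h$ and $|\bb|_c\ge w$, the dimensional requirement of $\cS_{\cF}^{(*,*)}$ is built into the hypothesis, and it remains only to match the ``no forbidden subblock'' condition.

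For the key equivalence I would invoke Remark~\ref{st_rem}: a width-$w$ block is generated by $G_r(\cS)$ if and only if each of its $h\times w$ subblocks lies in $\cA_{\cF}$, and dually for $G_c(\cS)$. The one remaining geometric observation is that every $h\times w$ subblock of $\bb$ is contained in some full-height width-$w$ vertical strip of $\bb$ (the strip spanning the same $w$ columns), and likewise in some full-width height-$h$ horizontal strip. Hence the family of $h\times w$ windows ranging over all vertical strips (resp.\ over all horizontal strips) is exactly the family of all $h\times w$ subblocks of $\bb$.

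Combining these gives the result directly. Condition (i) holds if and only if every $h\times w$ subblock of $\bb$ lies in $\cA_{\cF}$, i.e.\ if and only if $\bb$ contains no forbidden block of $\cF$; and condition (ii) is equivalent to the very same statement. Therefore $G(\cS)$ generates $\bb$ if and only if $\bb$ contains no forbidden block while having height $\ge h$ and width $\ge w$, which is precisely $\bb\in\cS_{\cF}^{(*,*)}$. This yields both directions at once: for $(\Leftarrow)$ the absence of forbidden blocks makes every strip generatable by Remark~\ref{st_rem}, so both (i) and (ii) hold; for $(\Rightarrow)$ either (i) or (ii) alone already forces all windows to be allowed.

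I expect the only genuine care to lie in the bookkeeping hidden inside Remark~\ref{st_rem}: verifying that a blue path in $G_r(\cS)$ over allowed vertices corresponds exactly to stacking consecutive overlapping $h\times w$ windows with matching labels, so that ``a generating path exists'' is equivalent to ``all consecutive windows are in $\cA_{\cF}$''. Once that correspondence is granted, the two-dimensional theorem is a clean decoupling argument with no further obstacle; in particular, the coupling conditions among four adjacent vertices in Fig.~\ref{fig:condT} are not needed for this membership characterization, as they pertain instead to the constructive generation treated in the next section.
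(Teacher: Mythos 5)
Your proof is correct, and it takes a genuinely different---and more direct---route than the paper's. For the nontrivial direction ($\bb \in \cS_{\cF}^{(*,*)} \Rightarrow G(\cS)$ generates $\bb$), the paper argues by contradiction: it selects a minimal prefix $\bv$ of a putative counterexample such that $G(\cS)$ generates $\pi_r(\bv)$ and $\pi_c(\bv)$ but not $\bv$, and then invokes the two coupled paths among four adjacent vertices of Fig.~\ref{fig:condT}, namely $s(i\!-\!1,j\!-\!1) \rar s(i\!-\!1,j) \bar s(i,j)$ and $s(i\!-\!1,j\!-\!1) \bar s(i,j\!-\!1) \rar s(i,j)$, to conclude that $\bv$ is generated after all. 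You instead observe that the definition of ``$G(\cS)$ generates $\bb$'' is literally the conjunction of two families of strip conditions, that Remark~\ref{st_rem} characterizes each strip condition as ``every $h\times w$ window of the strip lies in $\cA_{\cF}$,'' and that the windows of all full-height (resp.\ full-width) strips are exactly the windows of $\bb$; hence each family of conditions separately is equivalent to $\bb$ containing no forbidden block, which is membership in $\cS_{\cF}^{(*,*)}$ given the size constraints built into the definition of generation. Your decoupling buys brevity and also makes explicit something the paper's proof does not: either half of the definition alone already characterizes membership, so the conjunction is redundant as far as Theorem~\ref{th:GA} is concerned. What the paper's longer argument buys is exposure of the local consistency structure of Fig.~\ref{fig:condT}, which---as you correctly note---is not needed for the membership statement but is precisely the mechanism reused in Section~\ref{howto} to grow a block step by step; your strip-by-strip existential argument does not exhibit that machinery. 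Both proofs ultimately rest on Remark~\ref{st_rem}, which the paper asserts as straightforward; your closing caveat about verifying the correspondence between blue paths and stacks of overlapping allowed windows is exactly the point that needs (routine) checking, and it is the same fact the paper's proof uses implicitly when it asserts that suffix strips of $\pi_r(\bv)$, $\pi_c(\bv)$, and $\bv$ yield the required edges.
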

\begin{proof}
We first prove that if $G(\cS)$ generates $\bb$ then $\bb \in \cS_{\cF}^{(*,*)}$.
If $G(\cS)$ generates $\bb$, then any $h\times w$ subblock of $\bb$ 
is a vertex in $G(\cS)$, and therefore,  an element of $\cA_{\cF}$.
Thus, $\bb$ does not contain any forbidden block as a subblock, and hence, 
 $\bb \in \cS_{\cF}^{(*,*)}$. 

We next prove that if $\bb \in \cS_{\cF}^{(*,*)}$ then $G(\cS)$ generates $\bb$.
We assume by contradiction that $G(\cS)$ cannot generate $\bu \in \cS_{\cF}^{(*,*)}$. 
Since $\bu \in \cS_{\cF}^{(*,*)}$, 
any prefix of $\bu$ of size $h \times |\bu|_c$ and
any prefix of $\bu$ of size $|\bu|_r \times w$ 
are generated by subgraphs $G_c(\cS)$ and $G_r(\cS)$, respectively.
Therefore, we have
$h < |\bu|_r$ and $w < |\bu|_c$ (since otherwise, $G(\cS)$ can generate $\bu$), 
and there exists a prefix $\bv$ of $\bu$ such that
\begin{align}
&G(\cS)\ \text{does not generate}\ \bv,\label{eq:pGA1}\\
&G(\cS)\ \text{generates}\ \pi_r(\bv),\label{eq:pGA2}\\
&G(\cS)\ \text{generates}\ \pi_c(\bv).\label{eq:pGA3}
\end{align}
Observe that (\ref{eq:pGA2}) and (\ref{eq:pGA3}) 
imply $|\bv|_r>h$ and $|\bv|_c>w$, and therefore, $\bv \in \cS_{\cF}^{(*,*)}$ holds. 

Let $(i,j)$ be the right-bottom coordinate of $\bv$ in $\bu$, so the right-bottom coordinate of $\pi_r(\bv)$ and $\pi_c(\bv)$ are given by $(i\!-\!1, j)$ and $(i, j\!-\!1)$, respectively.
From (\ref{eq:pGA2}),  the suffix of $\pi_r(\bv)$ of size $h \times |\bv|_c$ is an element of $\cS_{\cF}^{(h, *)}$, 
so path $s(i\!-\!1, j\!-\!1) \rar s(i\!-\!1, j)$ exists.
Similarly, 
from (\ref{eq:pGA3}),  the suffix of $\pi_c(\bv)$ of size $|\bv|_r \times w$ is an element of $\cS_{\cF}^{(*, w)}$, so path $s(i\!-\!1, j\!-\!1) \bar s(i, j\!-\!1)$ exists.
Moreover, since $\bv \in \cS_{\cF}^{(*, *)}$,
the suffix of $\bv$ of size $h \times |\bv|_c$ is an element of $\cS_{\cF}^{(h, *)}$
so path $s(i, j\!-\!1) \rar s(i, j)$ exists.
Similarly,
the suffix $\by$ of $\bv$ of size $|\bv|_r \times w$ is an element of $\cS_{\cF}^{(*, w)}$
so path $s(i\!-\!1, j) \bar s(i, j)$ exists.

Therefore, there exist two paths $s(i\!-\!1, j\!-\!1) \rar s(i\!-\!1, j) \bar s(i, j)$
and $s(i\!-\!1, j\!-\!1) \bar s(i, j\!-\!1) \rar s(i, j)$ shown in Fig.~\ref{fig:condT}.
Hence, $G(\cS)\ \text{generates}\ \bv$, which contradicts 
the assumption (\ref{eq:pGA1}) as desired. 
\end{proof}

We note here a very important remark that the arguments we have done so far 
hold even though we generate $G_r(\cS)$ or $G_r(\cS)$ by allowing their vertex set $\cA_{\cF}$ to be a multi-set; that is, 
two or more same blocks can be used as distinct vertices in $G_r(\cS)$ or $G_r(\cS)$. 
That is a key point to generate blocks from $G(\cS)$ which will be discussed in the next section.

\section{How to Generate Blocks from $G(\cS)$\label{howto}}
\subsection{Preliminaries for $G(\cS)$}

From Theorem~\ref{th:GA},  for $\bb \in \cS_{\cF}^{(*, *)}$, 
a subblock $\by$ of size $h \times |\bb|_c$ is generated by a subgraph $G_c(\cS)$. 
So \INS{its $h\times w$ prefix $\bv$} is the head block of the path 
$\tau$ in the $G_c(\cS)$ generating $\by$. 
Since there are $||\cA_{\cF}||$ vertices in $G_c(\cS)$, 
we can consider $||\cA_{\cF}||$ different subgraphs 
$G_c(\cS)$'s with respect to the head block $\bv$, and define the $g(\bv)$\emph{-class of }$G_c(\cS)$ to be the subgraph $G_c(\cS)$ each path in which has $\bv$ as its head block. 

Now suppose that for the block 
$\bv$ above, its right-bottom coordinate in $\bb$ is $(i-1, w)~(i > h)$,
\REP{$\bv$ is the head block of $\by$ whose right-bottom coordinate in $\bb$ is $(i-1, w)~(i > h)$,}
and hence, the right-bottom coordinate of $\by$ in $\bb$ is $(i-1, |\bb|_c)$. 
If 
\REP{$\bv'$ the $h \times |\bb|_c$ subblock}
{$\bv'$ is the $h \times w$ subblock} of $\bb$
whose right-bottom coordinate is  $(i, w)$,  
then $\bv'$ is the head block of 
the path $\tau'$ generating $h \times |\bb|_c$ subblock $\by'$ of $\bb$ whose right-bottom coordinate  is $(i, |\bb|_c)$. In other words, the 
$g(\bv')$-class of $G_c(\cS)$ generates $\by'$. 
In such a case, we can always find blue edges from the $g(\bv)$-class of $G_c(\cS)$ to
the $g(\bv')$-class of $G_c(\cS)$ satisfying condition in Fig.~\ref{fig:condT}, 
by considering $s(i-1,j-1), s(i-1,j)$ and $s(i,j-1), s(i,j)$ to be vertices in $g(\bv)$-class of $G_c(\cS)$ and 
vertices in $g(\bv')$-class of $G_c(\cS)$, respectively (except for the case that some vertices have no attached blue edges). 
We imply such a connection by drawing a blue edge from  
the $g(\bv)$-class of $G_c(\cS)$ to the $g(\bv')$-class of $G_c(\cS)$.

Fig.~\ref{fig:exTorusR} illustrates the $g(\bv)$-classes of $G_c(\cS)$
and the connections between them in $G(\cS)$ 
when $\cA_{\cF}$ in Fig.~\ref{fig:A} is given. 
The subgraph $G_c(\cS)$ in a red circle having the number $k(=g(\bv))$, $1 \le k \le 7=||\cA_{\cF}||$, in black circle represents
the $k$-class of $G_c(\cS)$. 
Note that it is of course possible to follow the similar argument for $G_r(\cS)$ and define $||\cA_{\cF}||$ different
$g(\bu)$-class of $G_r(\cS)$'s, by considering a subblock $\bx$ of size $|\bb|_r \times w$  
and the prefix $\bu$ of $\bx$ of height $h$.

	\begin{figure}[htb]
		\begin{center}
			\includegraphics[width=0.6\linewidth]{\figpath/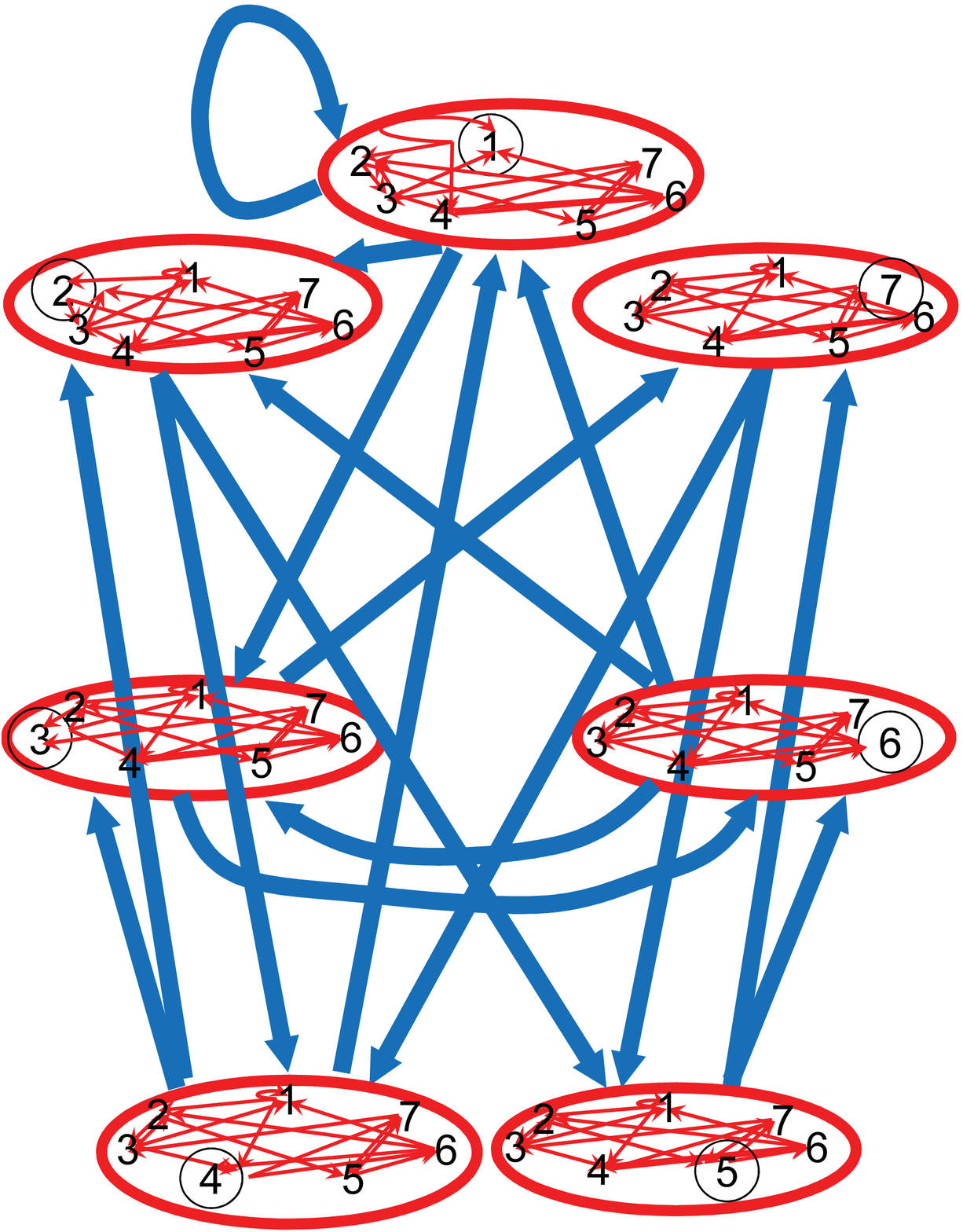}
			\caption{The $g(\bs)$-classes of $G_c(\cS)$ and their connections in $G(\cS)$}
			\label{fig:exTorusR}
		\end{center}
	\end{figure}

\subsection{Process to Generate Blocks in $\cS_{\cF}^{(*,*)}$}
We are now in a position to show how to generate a block $\bb\in \cS_{\cF}^{(*,*)}$ by means of the $g(\bv)$-classes of $G_c(\cS)$
and their connections in $G(\cS)$. 
We explain, as an example, 
a case of generating a $3\times 5$ block $\bb$
using Fig.~\ref{fig:exTorusR} to avoid ambiguity.  
Fig.~\ref{fig:trans} shows a process to generate $\bb$, where
newly generated blocks and identifiers are written by bold numbers in
each step.

In the first step, pick an arbitrary vertex $\bv$ in $\cA_{\cF}$.
Say suppose that $\bv$ is the first block in Fig.~\ref{fig:A} with identifier $1$.
The vertex of identifier $1(=s(2, 2))$ is the head block for both $G_r(\cS)$ and $G_c(\cS)$.
Therefore, 
the $1$-class of $G_r(\cS)$ generates a prefix of $\bb$ of width $w$, and
the $1$-class of $G_c(\cS)$ generates a prefix of $\bb$ of height $h$.

In the second and third steps, $s(2, 3) = 2$ and $s(3, 2) = 3$ are generated by the $1$-class of $G_c(\cS)$
and $G_r(\cS)$, respectively. 
\INSa{Since $s(2(=h), 3)$ and $s(3, 2(=w)) = 3$, the second and the third step
satisfy Case 1 in Remark~\ref{case:rem}.}
Moreover, a vertex of identifier $3 (=s(3, 2))$ is the head block for $G_c(\cS)$, that is the $3$-class of $G_c(\cS)$,
which generates a subblock of the second and third rows.
The fourth, seventh, and eighth steps satisfy paths among four adjacent vertices shown in Fig.~\ref{fig:condT}
since Case 2 in Remark~\ref{case:rem} is satisfied in these steps.
For example, in the forth step, $(1, 2, 3, 4)$ \INSa{at the fifth rectangle in the second} row in Fig.~\ref{fig:exCond} is utilized
to determine $s(3, 3)(=4)$
In the seventh and eighth steps, $(2, 5, 4, 7)$ and $(5, 6, 7, 3)$ are utilized to determine $s(3, 4)(=7)$ and
$s(3, 5)(=3)$, respectively.
The fifth and sixth steps satisfy Case 1 \INSa{in Remark~\ref{case:rem}} so that they are implemented in the $1$-class of $G_c(\cS)$.

Strong points of our method are 
\begin{enumerate}
\item  \INSa{we can freely select any vertex in $\cA_{\cF}$ in the first step;}
\item  we can freely select a step (so a process is not unique); and 
\item  it is possible to make the size of the desired block $\bb$ bigger in the middle of a process
\end{enumerate}
as long as each step satisfies
Case 1 or Case 2 in Remark~\ref{case:rem}.  Thus, we can say that our method is easy to apply and flexible.

	\begin{figure}[htb]
		\begin{center}
			\includegraphics[clip, width=0.76\linewidth]{\figpath/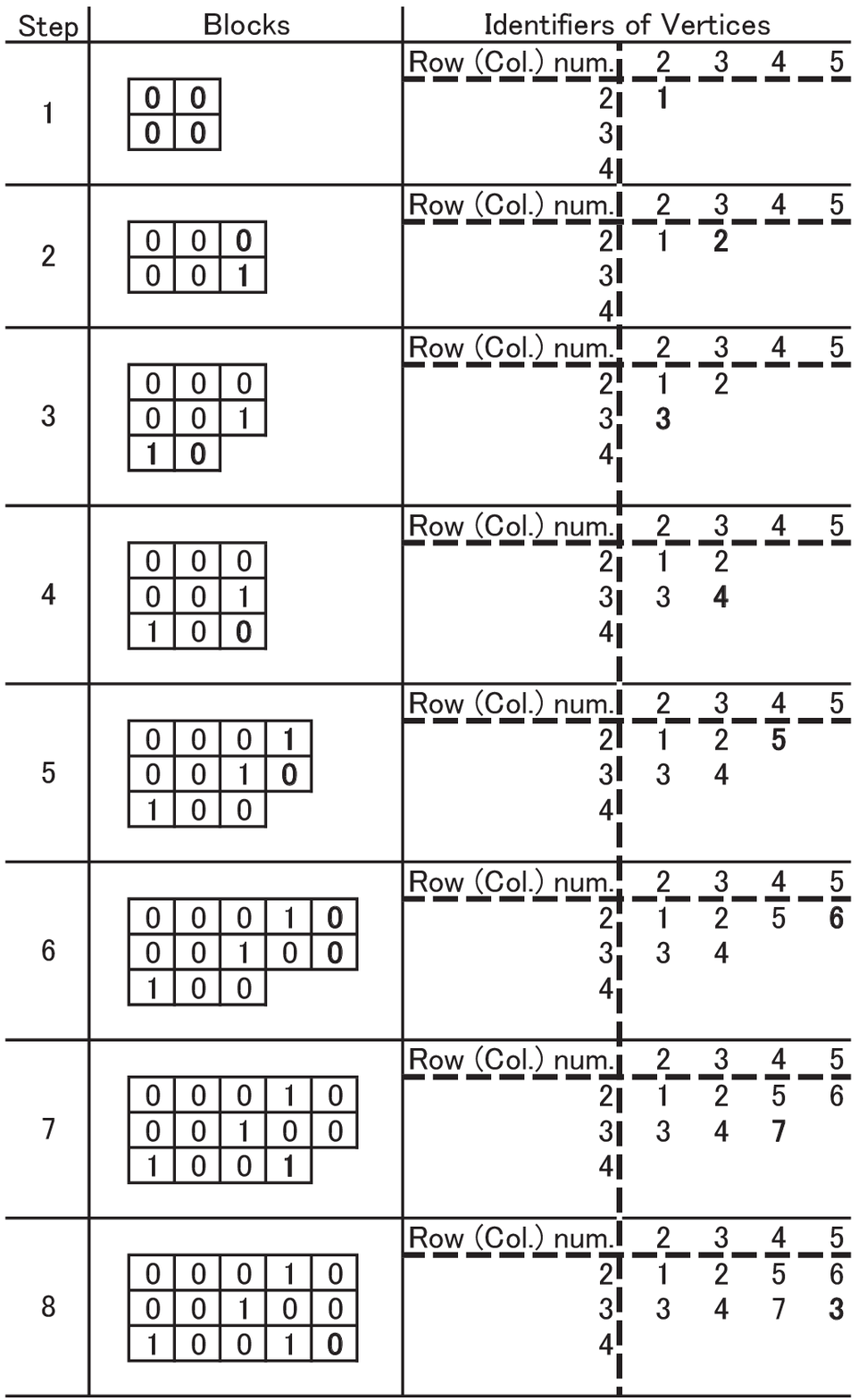}
			\caption{A process to generate a block}
			\label{fig:trans}
		\end{center}
	\end{figure}

\section{Conclusion \label{conclusion}}
In this paper, 
we proposed a graph representation for the 2D-FTCS $\cS=\cS_{\cF}$ (or $\cS_{\cF}^{(*,*)}$
to be more precise)
characterized by a finite set of forbidden set $\cF$. 
More precisely, we proposed a labelled directed graph $G(\cS)$ obtained from 
the row-wise and the column-wise presentations, and then proved that the $G(\cS)$ generates a block if and only if
the block is an element of $\cS_{\cF}^{(*,*)}$. 
We further explained how to indeed generate an arbitrary block in $\cS_{\cF}^{(*,*)}$
from $G(\cS)$. 
\INS{Moreover, the proposed graph representation can be easily extended to three or higher dimensional forbidden set by 
adding a labelled directed graph such as $G_r(\cS)$ for an adding axis.}
As a future work, we hope to apply the
the proposed graph for further analysis in two-dimensional case, such as a 2D antidictionary coding which utilizes
a subset of 2D antidictionary~\cite{OM15Ho, OM14Me2} and the computation of  the capacities of
2D-FTCSs.

\bibliographystyle{ieeetr}
\bibliography{isit16_2DG_v142_arXiv}
\end{document}